\newif\ifdraft\draftfalse
\newif\ifcite\citefalse
\newif\ifblow\blowtrue
\ifcite\usepackage{showkeys}\else\usepackage[notcite,notref]{showkeys}\fi\fi
\newtheorem{theorem}[equation]{Theorem}
\theoremstyle{remark}
\theoremstyle{definition}
\newtheorem{definition}[equation]{Definition}
\theoremstyle{remark}
\newtheorem{remark}[equation]{Remark}
\newtheorem{example}[equation]{Example}
\numberwithin{equation}{section}
\def\bc{\begin{cases}}
\def\ec{\end{cases}}
\def\a{\alpha}
\def\bc{{\mathbb C}}
\def\bu{{\mathbb U}}
\def\er{\eqref}
\def\bc{\mathbb C}
\def\os{\overset}
\def\lp2{L_pH_{2p}}
\def\bean{\begin{eqnarray}}
\def\eean{\end{eqnarray}}
\def\bea{\begin{eqnarray*}}
\def\eea{\end{eqnarray*}}
\def\beq{\begin{equation}}
\def\eeq{\end{equation}}
\def\beq*{\begin{equation*}}
\def\eeq*{\end{equation*}}
\def\bal{\begin{align*}}
\def\eal{\end{align*}}
\def\baln{\begin{align}}
\def\ealn{\end{align}}
\def\beg{\begin{gather*}}
\def\eng{\end{gather*}}
\def\bqu{\begin{question}}
\def\equ{\end{question}}
\def\ban{\begin{proof}[Answer]}
\def\ean{\end{proof}}
\def\p{\partial}
\def\on{\operatorname}
\def\bqu{\begin{question}}
\def\equ{\end{question}}
\def\0110{\begin{matrix} 0 & 1\\1&0\end{matrix}}
\def\fg{\mathfrak{g}}
\def\fh{\mathfrak{h}}
\def\fl{\mathfrak{l}}
\def\fn{\mathfrak{n}}
\def\fo{\mathfrak{o}}
\def\fs{\mathfrak{s}}
\def\ban{\begin{proof}[Answer]}
\def\ean{\end{proof}}
\def\ben{\begin{equation}}
\def\een{\end{equation}}
\def\j1{{(j+1)}}
\def\e{\epsilon}
\def\bu{{\bf u}}
\begin{document}

\title[On characteristic integrals of Toda field theories]{On characteristic integrals of Toda field theories}

\author{Zhaohu Nie}
\email{zhaohu.nie@usu.edu}
\address{Department of Mathematics and Statistics, Utah State University, Logan, UT 84322-3900}

%\date{February 15, 2013}

%\subjclass[2000]{}

\begin{abstract}
Characteristic integrals of Toda field theories associated to general simple Lie algebras are constructed using systematic techniques, and complete mathematical proofs are provided. Plenty of examples illustrating the results are presented in explicit forms. 
\end{abstract}

\maketitle

\section{Introduction}

First consider the famous Liouville equation, for which we take the following version:
\begin{equation}\label{liouville}
u_{xy}=-e^{2u},
\end{equation}
where $x$ and $y$ are the independent variables, and $u=u(x,y)$ is an unknown function. 
Let
\begin{equation}\label{cl for l}
I=u_{xx}-u_x^2.
\end{equation}
Then
it is easily checked that 
$I_{y}=\tfrac{\p}{\p y}I=0$
for a solution $u(x,y)$ to \er{liouville}. 
$I$ is thus called a \emph{characteristic integral} of \er{liouville}. 

Toda field theories are generalizations of the Liouville equation \er{liouville}. Let $\fg$ be a simple Lie algebra of rank $n$ with Cartan matrix $A=(a_{ij})_{i,j=1}^n$. For $1\leq i\leq n$, let $u^i=u^i(x,y)$ be $n$ unknown functions of the independent variables $x$ and $y$. The Toda field theory associated to $\fg$ 
is the system of nonlinear PDEs:
\begin{equation}\label{toda}
u^i_{xy}=-e^{\rho_i}:=-\exp\bigg({\sum_{j=1}^n a_{ij}u^j}\bigg),\quad 1\leq i\leq n.
\end{equation}
The Liouville equation \er{liouville} is the Toda field theory associated to $A_1=\fs\fl_2$. 

Toda field theories are fundamental integrable systems with rich properties and important applications in mathematics and physics. They have been extensively studied, and we refer to the two books \cites{BBT,LS-book} for surveys on them. In this paper, we are concerned with the explicit forms of their characteristic integrals. 

\begin{definition}
A differential polynomial in the $u^i(x,y)$ for $1\leq i\leq n$ is a polynomial in the $u^i$ and their partial derivatives with respect to $x$ of various orders $u^i_x, u^i_{xx},\cdots$. 

A \emph{characteristic integral} of the Toda field theory \er{toda} is a differential polynomial $I$ in the $u^i(x,y)$ for $1\leq i\leq n$ such that $I_y=0$ for the solutions $u^i(x,y)$ to \er{toda}. 
\end{definition}

By the symmetry between $x$ and $y$ in the Toda field theory \er{toda}, there are differential polynomials $\tilde I$ in the $u^i$ and their $y$-partial derivatives such that $\tilde I_x=0$ but they clearly follow the same patterns as those in the above definition.

Since a differential polynomial in characteristic integrals is another such integral, the characteristic integrals form a differential algebra. Shabat and Yamilov \cite{preprint} showed that the characteristic integrals of the Toda field theory \er{toda} form a \emph{polynomial} differential algebra generated by $n$ \emph{primitive characteristic integrals}, and that the system \er{toda} has such a complete set of characteristic integrals if and only if the matrix $A=(a_{ij})$ is equivalent to the Cartan matrix of a simple Lie algebra.  

For a differential mononomial in the $u^i$, we call by its \emph{degree} 
the sum of the orders of differentiation multiplied by the algebraic degrees of the corresponding factors. Therefore the $I$ in \er{cl for l} has a homogeneous degree $2$. The primitive characteristic integrals can be chosen to be homogeneous and their degrees are equal to the degrees of the Lie algebra $\fg$. Recall that the algebra of adjoint-invariant functions on $\fg$ is a polynomial algebra on $n$ homogenous generators, whose degrees we call the degrees of $\fg$ \cite{K1}. For a proof of this structure theorem, see \cite{Shabat} and \cite{FF-LNM}*{Theorem 2.4.10 and Proposition 2.4.7}. 
This paper is concerned with constructing these primitive characteristic integrals. 

Many works \cite{LS-book, W-sym, gen-W-sym, dem} have been devoted to the characteristic integrals, 
very often under different names such as \emph{local conservation laws, chiral currents, intermediate integrals} or \emph{$W$-algebras} and from different viewpoints. In particular, the works of Guryeva and Zhiber \cite{G-Zhiber, G-Z-arxiv} have shown that the Toda field theories \er{toda} for classical and exceptional Lie algebras are of Liouville type, and have obtained explicit formulas for the generalized Laplace invariants for these systems, which can be used to construct the characteristic integrals and higher symmetries. 

However to this author, the construction results about characteristic integrals for the Toda field theories \er{toda} are 
not explicit enough, in terms of both the formulas and the proofs. Therefore we would like to present here systematic and concrete techniques for constructing the primitive characteristic integrals together with new and self-contained proofs. 

Furthermore, many works \cite{LS-book,ho,ZS} utilize the characteristic integrals to obtain explicit solutions to the original Toda field theories \er{toda} by the method of Darboux. Recently, Anderson, Fels and Vassiliou \cite{ian's} developed a very general approach to the study of Darboux integrable systems. 

By constructing enough characteristic integrals, this paper explicitly shows that the Toda field theories \er{toda} are Darboux integrable. In sequels to this work, Anderson and the author will apply the characteristic integrals found in this paper and the structure theorem in \cite{ian's} to realize the Toda field theory \er{toda} as the quotient of two standard differential systems  \cite{Yamaguchi} depending on $x$ and $y$ separately. We will also apply the method and result of this paper to the setting of differential invariants for the standard differential systems, generalizing work by Mari Beffa \cite{beffa:2008a}. 

This paper is organized as follows. In Section 2, we present our main results. Our main theorem is Theorem \ref{ds gauge}, which is applicable to all simple Lie algebras. It employes the zero curvature representation \cite{LS} of the Toda field theories \er{toda} under a Drinfeld-Sokolov gauge \cite{DS}. When the Lie algebra $\fg$ has a 
%simple representation in a concrete sense 
non-branching representation (see \er{down}), 
%representation that does not branch, such as what we can do 
which is the case for Lie algebras of types $A,B,C$ and $\fg_2$, 
we present a more concrete formula in Theorem \ref{quick}, which is easier to use.
% (see ). 
We stress that in both cases, we provide complete and novel mathematical proofs. 

In Sections 3, we demonstrate the more direct method in Theorem \ref{quick} in all the applicable cases. 
%, with the explicit formulas for $\fg_2$ given. 
In Section 4, we illustrate the general method in Theorem \ref{ds gauge} by the example of the characteristic integral related to the Pffaffian of $D_4$. We stress that our formulas are easily implemented using mathematical softwares such as Maple, which this author and Anderson have done. Maple programs and formulas for the characteristic integrals can be found at the DifferentialGeometry Software Project website at the Digital Commons of Utah State Univeristy (http://digitalcommons.usu.edu/dg/). 
This makes our results readily usable for various purposes. 

\smallskip

\noindent{\bf Acknowledgment.} The author thanks Ian Anderson for introducing this topic to him. He thanks Anderson and Pawel Nurowski for helps with Maple. The author also thanks L\'aszl\'o Feh{\'e}r and Luen-Chau Li for their interests and useful correspondences. He thanks the referee for detailed comments and some references.

\section{Main results}

Let us first introduce some terminology. Let $\fh\subset \fg$ be a Cartan subalgebra, and we denote the corresponding set of roots of $\fg$ by $\Delta$, the sets of positive/negative roots by $\Delta_\pm$, and the set of positive simple roots by $\pi=\{\a_i\}_{i=1}^n$. Let $\fg=\fh\oplus \bigoplus_{\a\in \Delta} \fg_\a$ be the root space decomposition. 

For a root $\a$, define its height by $\on{ht}(\a)=\sum_{i=1}^n c_i$ if $\a=\sum_{i=1}^n c_i \a_i$. Also define the principal height gradation
\begin{equation}\label{height}
\fg=\bigoplus_{k=-m}^m \fg_k, \quad \fg_k=\bigoplus_{\on{ht}(\a)=k} \fg_\a,\ \fg_0=\fh,
\end{equation}
where $m$ is the maximal height of the roots. 
We also denote by $\fn=\bigoplus_{\a\in \Delta_+} \fg_\a=\bigoplus_{k>0} \fg_k$ the maximal nilpotent subalgebra, and by $N$ the corresponding unipotent group. 
%$\fb=\fh\oplus \fn$ the Borel subalgebra. 

For $\a\in \Delta_+$, let $e_\a$ and $e_{-\a}$ be root vectors in the root spaces $\fg_\a$ and $\fg_{-\a}$ such that for $H_\a=[e_\a,e_{-\a}]\in \fh$, we have $\a(H_\a)=2$. 
Then the Cartan matrix $A=(a_{ij})_{i,j=1}^n$ of $\fg$ is 
defined by $a_{ij}=\a_i(H_{\a_j})$.

Let us recall the zero curvature representation of \er{toda} following \cite{LS}. 
Let 
\begin{equation}\label{players}
%\begin{split}
{\bf u}=\sum_{i=1}^n u^i_x H_{\a_i}, \quad
%=\sum_{i=1}^n u(i)_1H_{\a_i},\\
\e=\sum_{i=1}^n e_{-\a_i}, \quad
Y=\sum_{i=1}^n e^{\rho_i} e_{\a_i},
%\end{split} 
\end{equation}
where as in \er{toda} $\rho_i=\sum_{j=1}^n a_{ij} u^j$. Then the Toda field theory \er{toda} is equivalent to the following zero curvature equation
\begin{equation}\label{zero curv}
[-\p_x+\e+{\bf u},\, \p_y+Y]=0.
\end{equation}

Now let us recall the definition of a Kostant slice $\fs\subset \fg$ \cite{K2}, which is used in a Drinfeld-Sokolov gauge \cite{DS}. Let $\fs$ be a complement of $[\e,\fg]$ in $\fg$, that is, 
\begin{equation}\label{split}
\fg\cong \fs\oplus [\e,\fg].
\end{equation}
Then by \cite{K2}, $\fs\subset \fn$, and $\dim(\fs)=n$ is equal to the rank. We call $\fs$ a Kostant slice, and let $\{s_j\}_{j=1}^n$ be a homogeneous basis of $\fs$ with respect to the height gradation \er{height}. 
%(We refer to \cite{K1} for more details.) 

By \cite{DS}, we can bring the first element in \er{zero curv} into its Drinfeld-Sokolov gauge. More precisely, there exists an element $g\in N$ (whose components are differential polynomials of the $u^i$) such that 
\begin{equation}\label{bring up}
g (-\p_x+\e+{\bf u}) g^{-1}=-\p_x + \e + {\bf I},\quad {\bf I}=\sum_{j=1}^n I_j s_j\in \fs, 
\end{equation}
where the components $I_j$ are differential polynomials of the $u^i$. 

\begin{theorem}\label{ds gauge} For the solutions $u^i$ to \er{toda}, the differential polynomials $I_j$ for $1\leq j\leq n$ defined in \er{bring up} through a Drinfeld-Sokolov gauge are primitive characteristic integrals, that is, $\p_y I_j=0$.  
\end{theorem}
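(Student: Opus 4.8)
The plan is to transport the zero curvature equation \er{zero curv} through the very same gauge transformation $g\in N$ that produces the Drinfeld--Sokolov form \er{bring up}, and then to extract $\p_y{\bf I}=0$ by decomposing the resulting flat connection along the height gradation \er{height}. Since conjugation by $g$ is an automorphism of the algebra of differential operators, flatness is preserved:
\beq
[\,g^{-1}(-\p_x+\e+{\bf u})g,\ g^{-1}(\p_y+Y)g\,]=g^{-1}[-\p_x+\e+{\bf u},\,\p_y+Y]\,g=0 .
\eeq
By \er{bring up} the first factor is $-\p_x+\e+{\bf I}$, while the second becomes $\p_y+\widetilde Y$ with $\widetilde Y=g^{-1}(\p_y g)+g^{-1}Yg$. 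Two facts about $\widetilde Y$ are crucial. Because $g\in N$ and $\fn$ is a subalgebra, $g^{-1}Yg=\on{Ad}(g^{-1})Y\in\fn$; and $g^{-1}(\p_y g)$ is the logarithmic derivative of an $N$-valued function, hence also lies in $\fn=\bigoplus_{k\geq1}\fg_k$. Thus $\widetilde Y$ has only positive heights. Expanding the flatness identity as an operator relation, the second-order and the surviving first-order derivative terms cancel because mixed partials commute, leaving the purely multiplicative relation
\beq
\p_y{\bf I}=-\p_x\widetilde Y+[\,\e+{\bf I},\,\widetilde Y\,].
\eeq

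Next I would read this relation off height by height. Both $\fs$ and $[\e,\fg]$ are homogeneous for the gradation \er{height} (the former by the choice of homogeneous basis $\{s_j\}$, the latter because $\e\in\fg_{-1}$), so \er{split} refines to $\fg_m=\fs_m\oplus[\e,\fg]_m$ in each height $m$. Write $\widetilde Y=\sum_{k\geq1}\widetilde Y_k$. Note also that $\p_y{\bf I}=\sum_j(\p_y I_j)s_j\in\fs$ since the $s_j$ are constant. I claim, by induction on $k\geq1$, that $\widetilde Y_k=0$ and $\p_y{\bf I}_{k-1}=0$. Assuming $\widetilde Y_1=\cdots=\widetilde Y_{k-1}=0$, project the displayed relation onto $\fg_{k-1}$: the term $\p_x\widetilde Y_{k-1}$ vanishes by the inductive hypothesis; the height-$(k-1)$ part of $[{\bf I},\widetilde Y]$ involves only the components $\widetilde Y_b$ with $b\leq k-2$, which also vanish; and $[\e,\widetilde Y]$ contributes exactly $[\e,\widetilde Y_k]$ since $\e$ lowers height by one. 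Hence $\p_y{\bf I}_{k-1}=[\e,\widetilde Y_k]$, with the left side in $\fs_{k-1}$ and the right side in $[\e,\fg]_{k-1}$. As these summands meet only in $0$, both vanish, giving $\p_y{\bf I}_{k-1}=0$ and $[\e,\widetilde Y_k]=0$.

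The one genuinely structural input — and the step I expect to carry the weight of the argument — is the upgrade from $[\e,\widetilde Y_k]=0$ to $\widetilde Y_k=0$. Here I would invoke that $\e=\sum_i e_{-\a_i}$ is a regular (principal) nilpotent, so it sits in a principal $\fs\fl_2$-triple whose semisimple element induces a positive multiple of the height gradation. The $\fs\fl_2$-representation theory of $\ad\e$ then shows that $\ad\e\colon\fg_k\to\fg_{k-1}$ is injective for every $k\geq1$, its kernel (the lowest weight vectors) lying in non-positive heights. This closes the induction and yields $\widetilde Y=0$; the base case $k=1$ is the height-$0$ projection, where $\fs_0=0$ forces $[\e,\widetilde Y_1]=0$ and hence $\widetilde Y_1=0$ by the same injectivity. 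Feeding $\widetilde Y=0$ back, or simply collecting the conclusions $\p_y{\bf I}_{k-1}=0$ over all $k$, gives $\p_y{\bf I}=0$, i.e. $\p_y I_j=0$ for all $1\leq j\leq n$, as desired.
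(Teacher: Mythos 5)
Your proposal is correct and takes essentially the same route as the paper's own proof: you conjugate the zero-curvature equation \er{zero curv} by the gauge element $g$, expand $\widetilde Y$ along the height gradation \er{height}, and run an induction in which the graded refinement of the splitting \er{split} kills both $\p_y {\bf I}$ and $[\e,\widetilde Y_k]$ at each height, with the injectivity of $\on{ad}_\e$ on positive heights (Kostant's theorem, which you rederive via the principal $\fs\fl_2$, and which the paper cites as $\ker(\on{ad}_\e)\cap(\fh\oplus\fn)=0$) closing the induction. The only cosmetic difference is that you write out the flatness identity as the explicit relation $\p_y{\bf I}=-\p_x\widetilde Y+[\e+{\bf I},\widetilde Y]$ and carry $\p_y{\bf I}_{k-1}=0$ inside the induction, whereas the paper concludes $\p_y I_j=0$ from the same height-by-height bracket equation.
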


\begin{proof}
%[Proof of Theorem \ref{ds gauge}] 
Suppose that for the $g\in N$ in \er{bring up}, we have 
$$g^{-1}(\p_y + Y)g=\p_y + \tilde Y.$$ 
Since $g\in N$ and $Y\in \fg_1$ by \er{players}, we have $\tilde Y\in \fn$. Suppose $\tilde Y=\sum_{i=1}^p Y_i$ by the height decomposition \er{height}, where $p$ is the biggest height of $\fg$. 

By the invariance of the zero-curvature equation \er{zero curv} under the  adjoint action by $g$, from \er{bring up} we get 
\begin{equation}\label{conjged}
\bigg[-\p_x + \e + \sum_{j=1}^n I_j s_j, \p_y + \sum_{i=1}^{p} Y_i\bigg]=0.
\end{equation}

We will prove by induction that all the $Y_i=0$. Clearly this then implies that $\p_y I_j=0$ for $1\leq j\leq n$. 

First recall the basic result of Kostant \cite{K1} that $\ker(\on{ad}_\e) \cap (\fh\oplus \fn)=0$. The term on the left of \er{conjged} with height zero is $[\e, Y_1]=0$, which then implies that $Y_1=0$. 

Now assume $i\geq 2$ and that $Y_{j}=0$ for $j\leq i-1$. Then since the $s_j\in \fn$, 
the term on the left of \er{conjged} with height $i-1$ is 
$$
[\e, Y_i] - \sum_{\on{ht}(s_j)=i-1} (\p_y I_j) s_j =0.
$$
By the decomposition \er{split}, we get $[\e, Y_i]=0$ which then implies that $Y_i=0$. 

These $I_j$ are primitive since without further differentiation, we are obtaining the primitive invariant functions on the Lie algebra by \cite{K2}. 
\end{proof}

\begin{remark}\label{DS gauging}
The calculations of $g\in N$ and the $I_j$ in \er{bring up} can be done inductively using the decomposition \er{split} as in \cite{DS, gen-W-sym}. For the reader's convenience, we provide some details. Consider 
$$
g = e^{a_1}\cdots e^{a_m}, \quad a_i\in \fg_i,\ i=1,\dots,m.
$$
%Then $$- g\cdot \partial_x g^{-1} = \partial_x g \cdot g^{-1} = \partial_x a_1 + e^{a_1} \partial_x a_2 e^{-a_1} + \cdots \in \fn.$$
The formula \er{bring up} becomes 
$$
\partial_x g \cdot g^{-1} + g (\e + {\bf u}) g^{-1} = \e + {\bf I}.
$$
%At each grade, we can choose $a_i\in \fg_i$ so as to gauge away the $[\epsilon, \fg_i]$ component of the left hand side by \er{split}. 
Inductively for $i\geq 1$, the $i-1$th component of the left hand side can be uniquely written as $[\e, a_i] + J_{i-1}$ with $J_{i-1} \in \fs\cap \fg_{i-1}$. Then exactly this $a_i$ will do the job of the Drinfeld-Sokolov gauging at grade $i-1$. The algorithm is easily implemented on a mathematical software such as Maple, as long as we know the structure equations of the Lie algebra. This author has worked out the example of $E_6$ Toda field theory by this method. In Section 4, we will see the example of $D_4$. 
\end{remark}
%In practice,  a computer software such as Maple solves it easily in simple cases. 

Furthermore very often there are more direct formulas for the characteristic integrals as we explain now. 

Take an irreducible representation $\phi:\fg\to \text{End}\, V$. Let the $\beta_k\in \fh^*$ for $1\leq k\leq m$ be the weights of $\phi$, and 
%that is, eigenvalues of $\phi(\fh)$. Then we have the following weight space decomposition
\begin{equation}\label{wt decomp}
V=\bigoplus_{k=1}^m V_{\beta_k},
\end{equation}
the weight space decomposition. 
%the eigenspace deomposition of $\phi(\fh)$. 
We assume that $\dim V_{\beta_k}=1$ and also that the representation $\phi$ does not branch. That is, for each weight $\beta_k$ there is one unique negative simple root $-\a_{i_k}$ such that $\beta_k-\a_{i_k}$ is another weight of $\phi$. 
%to bring it down to a lower weight. 
%Assume that $\beta_1>\beta_2> \cdots >\beta_m$ in the obvious sense and 
Order our weights such that $\beta_{k+1}=\beta_k-\a_{i_k}$ for $1\leq k\leq m-1$ and 
we draw the following weight diagram
\begin{equation}\label{down}
\beta_m\os{-\a_{i_{m-1}}}\longleftarrow \cdots \os{-\a_{i_2}}\longleftarrow \beta_2\os{-\a_{i_1}}\longleftarrow \beta_1. 
\end{equation}
Such non-branching representations occur for the first fundamental representations of the Lie algebras $A_n, B_n, C_n$ and $\fg_2$. 

\begin{theorem}\label{quick} If a non-branching representation $\phi$ as above exists with weight diagram \er{down}, then we have 
\begin{equation}\label{Pawel said}
[(\p_x - \beta_{{1}}(\bu))(\p_x - \beta_{2}(\bu))\cdots(\p_x - \beta_{m}(\bu)),\p_y]=0,
\end{equation}
for a $\bu$ \er{players} satisfying \er{zero curv} or equivalently the Toda filed theory \er{toda}. 
%Here $\p=\p_x$, and the product is in the sense of composition for operators on functions of $x$ and $y$. 
The product is in the sense of composition for operators on functions of $x$ and $y$, and we strictly follow the order. If by the Leibniz rule, we expand 
$$
(\p_x - \beta_{{1}}(\bu))(\p_x - \beta_{2}(\bu))\cdots(\p_x - \beta_{m}(\bu))=\p_x^{m}+\sum_{j=1}^{m} I_j \p_x^{m-j},
$$
then \er{Pawel said} implies that 
\begin{equation}\label{reason}
\p_y I_j=0,\quad 1\leq j\leq m. 
\end{equation}
\end{theorem}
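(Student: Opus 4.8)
The plan is to realize the scalar operator in \er{Pawel said} as the scalar reduction of the first-order matrix system furnished by the zero curvature representation, and then to exploit flatness. First I would fix the non-branching representation $\phi$ and choose a weight basis $v_1,\dots,v_m$ of $V$ adapted to \er{down}. Since every weight space is one-dimensional I may normalize $v_{k+1}=\phi(e_{-\a_{i_k}})v_k$, so that $\phi(\e)=\sum_i\phi(e_{-\a_i})$ acts as the single lowering shift $v_k\mapsto v_{k+1}$, with $\phi(\e)v_m=0$ because $\beta_m$ is the lowest weight; meanwhile $\phi(\bu)$ is diagonal with $v_k$-entry $\beta_k(\bu)$, and $\phi(Y)=\sum_i e^{\rho_i}\phi(e_{\a_i})$ is strictly weight-raising. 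It is precisely the non-branching hypothesis that collapses the lowering into one chain, so that the matrix system reduces to a single scalar operator.

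Next I would set $A=-\p_x+\phi(\e)+\phi(\bu)$ and expand $A\Psi=0$ for $\Psi=\sum_k\psi_k v_k$ into components. Reading off the coefficient of $v_k$ gives $(\p-\beta_k(\bu))\psi_k=\psi_{k-1}$ for $k\ge 2$ and $(\p-\beta_1(\bu))\psi_1=0$, where $\p=\p_x$. Back-substitution starting from the bottom component $\psi_m$ expresses each $\psi_k$ through $\psi_m$ and converts the top equation into
\[
(\p-\beta_1(\bu))(\p-\beta_2(\bu))\cdots(\p-\beta_m(\bu))\,\psi_m=0,
\]
that is $P\psi_m=0$ for exactly the operator $P$ of \er{Pawel said}. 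The same reconstruction shows that $\Psi\mapsto\psi_m$ is a bijection between the solutions of $A\Psi=0$ and those of $P\psi=0$.

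Then I would invoke \er{zero curv}: applying $\phi$ turns it into the operator identity $[A,B]=0$ with $B=\p_y+\phi(Y)$, whence $A\Psi=0$ implies $A(B\Psi)=B(A\Psi)+[A,B]\Psi=0$, so $B$ maps solutions of $A\Psi=0$ to solutions. The crucial point is the bottom component of $B\Psi$: because $\phi(Y)$ raises weights it can never land in the lowest weight space $V_{\beta_m}$, so the $v_m$-component of $\phi(Y)\Psi$ vanishes and that of $B\Psi$ is simply $\p_y\psi_m$. Since $B\Psi$ again solves $A(\cdot)=0$, its bottom component satisfies the scalar equation, giving the implication $P\psi_m=0\Rightarrow P(\p_y\psi_m)=0$.

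Finally I would conclude by a dimension count. Using the bijection I can choose solutions $\Psi^{(1)},\dots,\Psi^{(m)}$ of $A\Psi=0$ whose bottom components $\psi^{(i)}:=\psi^{(i)}_m$ form a fundamental system of $P\psi=0$; each then satisfies $P\psi^{(i)}=0$ and $P(\p_y\psi^{(i)})=0$, hence $[\p_y,P]\psi^{(i)}=\p_y(P\psi^{(i)})-P(\p_y\psi^{(i)})=0$. On the other hand $[\p_y,P]=\sum_{j=1}^m(\p_y I_j)\,\p^{m-j}$ is a differential operator in $\p_x$ of order at most $m-1$ that annihilates the $m$ functions $\psi^{(i)}$, whose Wronskian is nonvanishing since $P$ is monic; the resulting square linear system forces every coefficient $\p_y I_j$ to vanish. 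This yields \er{reason}, and hence the commutator identity \er{Pawel said}. I expect the main difficulty to lie in the representation-theoretic bookkeeping---normalizing the basis and establishing the vanishing of the bottom component of $\phi(Y)\Psi$, both of which genuinely rely on the non-branching assumption---together with the final Wronskian step that upgrades ``annihilates all solutions'' to ``is the zero operator''.
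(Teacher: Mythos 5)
Your proposal is correct and follows essentially the same route as the paper's proof: reduce the matrix equation $(-\p_x+\phi(\bu+\e))\psi=0$ along the non-branching weight chain to the scalar operator $P=(\p-\beta_1(\bu))\cdots(\p-\beta_m(\bu))$ acting on the lowest component $\psi_m$, use the weight-raising property of $\phi(Y)$ at the lowest weight vector to isolate $\p_y\psi_m$, and kill the order-$\le m-1$ commutator $[P,\p_y]$ by letting it act on an $m$-dimensional family of solutions. The only minor differences are that the paper works with joint solutions of both equations of the linear system \er{wave eqn} (so $\p_y\psi_m=0$ outright) where you let $\p_y+\phi(Y)$ act on $\ker(-\p_x+\phi(\bu+\e))$ via flatness, and your explicit Wronskian step spells out what the paper compresses into the phrase that the general solution's ``implied compatibility condition is exactly'' \er{Pawel said}.
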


\begin{proof}
%[Proof of Theorem \ref{quick}] 
In the weight decomposition \er{wt decomp}, we let $v_{\beta_1}\in V_{\beta_1}$ be a weight vector for the highest weight, and inductively we define the other weight vectors by $v_{\beta_k}=\phi(e_{-\a_{i_{k-1}}})v_{\beta_{k-1}}$ for $2\leq k\leq m$ by \er{down}. Therefore by \er{players}
\begin{equation}\label{by ep}
\phi(\e) v_{\beta_{k-1}}=v_{\beta_{k}},\quad 2\leq k\leq m.
\end{equation}

The zero curvature equation \er{zero curv} is the compatibility condition for the following system of equations. 
Let ${\psi}(x,y)=\sum_{k=1}^m \psi_k(x,y) v_{\beta_k}$ be a function of $x$ and $y$ with values in $V$. 
%, the components $\phi_i=\phi_i(x,y)$ as functions of $x$ and $y$. 
Then \er{zero curv} implies that the following system of equations has solutions 
\begin{equation}\label{wave eqn}
\begin{cases}
(-\p_x+\phi({\bf u}+\e))\psi=0\\
(\p_y+\phi(Y))\psi=0.
\end{cases}
\end{equation}

%From \er{down}, we have that $\phi(e_{\a_{-i_{k}}}) v_{\beta_{k}}=c_k v_{\beta_{k+1}}$, for some scale $c_k$, and all the other $\phi(e_{\a_j})v_{\beta_{k+1}}=0$ for $j\neq i_{k}$. 
Then by \er{by ep} and $\bu\in \fh$ \er{players}, the first equation, at the weight vector $v_{\beta_k}$, means that 
\begin{equation}\label{induction}
(\p_x  - \beta_k (\bu)) \psi_k=\psi_{k-1},\quad 2\leq k\leq m.
\end{equation}
When $k=1$, we actually have 
$$
(\p_x-\beta_1 (\bu))\psi_1=0,
$$
since $\beta_{1}$ is the highest weight. 
Therefore combining them, we have 
$$
(\p_x - \beta_{1}(\bu))(\p_x - \beta_{2}(\bu))\cdots(\p_x - \beta_{m}(\bu)) \psi_m=0.
$$
On the other hand, the second equation in \er{wave eqn}, at the lowest weight vector $v_{\beta_m}$, quickly gives that 
$$
\p_y \psi_m=0,
$$
since $Y\in \fn$ \er{players}. 
Therefore the above two equations have a solution $\psi_m=\psi_m(x,y)$, which is general. The implied compatibility condition is exactly \er{Pawel said}. The rest of the theorem is clear. 
\end{proof}

We note that again the summands of the $I_j$ with only derivatives of order 1 again are just the algebraic invariant functions. Therefore if we know the degrees of the Lie aglebras, by the structure theorem we know the primitive characteristic integrals. 

%For the Toda field theory \er{toda} associated to a classical Lie algebra or $\fg_2$, there is a formula using ``differential determinants" to compute all the primitive intermediate integrals, except  the one related to the ``Pfaffian"  in the $D$ case. 

%We will prove the theorems in the next section. We present examples of Theorem \ref{quick} in Section 3, and an example of Theorem \ref{ds gauge} in Section 4. 
%For concreteness, we show several examples to illustrate our results. 

%and for many useful discussions.
%A Maple program of Prof. Anderson makes verification of our results possible. The author also thanks Prof. Pawel Nurowski for his computer program to compute the ``differential determiants." 

%\section{Proofs of the theorems}

%We present direct proofs of our theorems in this section.

%When the representation branches, we may very well need to integrate in the above procedure, as we will see in Section \ref{Dn's} on the $D_n$'s. 

\section{Examples for Theorem \ref{quick}}

The first fundamental representations of the Lie algebras $A_n, B_n, C_n$ and $\fg_2$ are non-branching representations \er{down}. Keeping our spirit of being as explicit as possible, we present the formulas in these cases. In this section we follow \cite{FH} for notation and choices of root vectors. The Cartan subalgebras $\fh$ always consist of diagonal matrices. We let $L_i\in \fh^*$ denote the linear function of taking the $i$th element on the diagonal. We also let $E_{ij}$ denote the matrix with a 1 at the $(i,j)$-position and zero everywhere else. For simplicity, we write $\p$ for $\p_x$.

\begin{example}[$A_n$] The simple roots for $A_n$ are $\a_i=L_i-L_{i+1}$ and 
the $H_{\a_i}=E_{i,i}-E_{i+1,i+1}$, $1\leq i\leq n$. 
The Cartan matrix is 
$$
\begin{pmatrix}
2 & -1 & & &\\
-1 & 2 & -1 & & \\
 & \ddots & \ddots & \ddots& \\
 & &-1 & 2 & -1\\
 & & & -1 & 2
 \end{pmatrix}
$$
The degrees of $A_n$ (that is, the degrees of primitive adjoint-invariant functions) are $2,3,\cdots,n+1$. 
The weight diagram for the first fundamental representation is 
$$
L_{n+1}\os{-\a_n}\longleftarrow \cdots \os{-\a_2}\longleftarrow L_2\os{-\a_1}\longleftarrow L_1.
$$
%Consider an element ${\bf u}=\sum_{i=1}^n u^i_xH_{\alpha_i}\in \fh$. 
The $\bu$ in \er{players} is 
$$
\bu=\on{Diag}(u^1_x, u^2_x - u^1_x, \cdots, u^n_x - u^{n-1}_x, -u^n_x)
$$
%Theorem \ref{quick} gives the following. 
By Theorem \ref{quick}, consider the expansion
\begin{equation}\label{An det}
\begin{split}
&\quad(\p-u^1_x)(\p+u^1_x-u^2_x)\cdots(\p+u^{n-1}_x-u^n_x)(\p+u^n_x)\\
&=\p^{n+1}+\sum_{j=1}^n I_j\p^{n-j}.
\end{split}
\end{equation}
%where the last expression is obtained by expansion and applying the Liebniz rule. 
Then the $I_j$ for $1\leq j\leq n$ are 
%differential polynomials in the $u^i$ and are 
primitive characteristic integrals of the $A_n$ Toda field theory \er{toda}.

%\begin{remark} We remark that our primitive characteristic integrals correspond to primitive invariant functions of the Lie algebras in a specific sense. That is, when restricted to the terms involving only the $u^i_x$, our primitive characteristic integrals become primitive invariant functions on the element $\bu$ \er{players}. Since we compute our characteristic integrals using differential determinants, the restriction to terms involving just the $u^i_1$ corresponds to not applying differentiation to the entries of $\bu$. Then we are just computing the algebraic characteristic polynomial of $\bu$, which is the way to compute primitive invariant functions for the classical Lie algebras.
%\end{remark} 
\end{example}

\begin{example}[$C_n$, $n\geq 2$]
The simple roots for $C_n$ are $\a_i=L_i-L_{i+1}$ for $1\leq i\leq n-1$ and $\a_n=2L_n$. Also $H_{\alpha_i}=E_{i,i}-E_{i+1,i+1}-E_{n+i,n+i}+E_{n+i+1,n+i+1}$ for $1\leq i\leq n-1$ and $H_{\alpha_n}=E_{n,n}-E_{2n,2n}$. The Cartan matrix is 
$$
\begin{pmatrix}
2 & -1 & & &\\
-1 & 2 & -1 & & \\
 & \ddots & \ddots & \ddots& \\
 & &-1 & 2 & -1\\
 & & & -2 & 2
 \end{pmatrix}
$$
The degrees of $C_n$ are $2,4,\cdots,2n$. 
The weight diagram for the first fundamental representation  is
$$
-L_1\os{-\a_1}\longleftarrow \cdots \os{-\a_{n-1}}\longleftarrow -L_n\os{-\a_n}\longleftarrow L_n\os{-\a_{n-1}}\longleftarrow \cdots\os{-\a_2}\longleftarrow L_2\os{-\a_1}\longleftarrow L_1.
$$
%Consider an element ${\bf u}=\sum_{i=1}^n u^i_x H_{\alpha_i}\in \fh$. 
%Motivated by the zero-curvature equation and \cite{F-1line}, we have the following result for $A_n$. 
The $\bu$ in \er{players} is 
$$
\bu=\on{Diag}(u^1_x, u^2_x - u^1_x, \cdots, u^n_x - u^{n-1}_x, -u^1_x, -u^2_x + u^1_x, \cdots, -u^n_x + u^{n-1}_x). 
$$

By Theorem \ref{quick}, consider the expansion
\begin{equation}\label{Cn det}
\begin{split}
&\quad (\p-u^1_x)(\p+u^1_x-u^2_x)\cdots(\p+u^{n-1}_x-u^n_x)\\
&\quad\quad (\p+u^{n}_x-u^{n-1}_x)
%(\p+u(n-1)_x-u(n-2)_x)
\cdots(\p+u^2_x-u^1_x)(\p+u^1_x)\\
&=\p^{2n}+\sum_{j=1}^n I_j\p^{2n-2j}+\sum_{j=1}^{n-1} J_j\p^{2n-2j-1}.
\end{split}
\end{equation}
%where the last expression is obtained by expansion and applying the Liebniz rule. 
Then the $I_j$ for $1\leq j\leq n$  are 
%differential polynomials in the $u^i$ and are 
primitive characteristic integrals of the $C_n$ Toda field theory, and the $J_j$ are some differential polynomials in them.  

\end{example}

\begin{example}[$B_n, n\geq 2$] The simple roots for $B_n$ are $\a_i=L_i-L_{i+1}$ for $1\leq i\leq n-1$ and $\a_n=L_n$. Also  $H_{\alpha_i}=E_{i,i}-E_{i+1,i+1}-E_{n+i,n+i}+E_{n+i+1,n+i+1}$ for $1\leq i\leq n-1$ and $H_{\alpha_n}=2E_{n,n}-2E_{2n,2n}$. The Cartan matrix is 
$$
\begin{pmatrix}
2 & -1 & & &\\
-1 & 2 & -1 & & \\
 & \ddots & \ddots & \ddots& \\
 & &-1 & 2 & -2\\
 & & & -1 & 2
 \end{pmatrix}
$$
The degrees of $B_n$ are $2,4,\cdots,2n$. 
The weight diagram for the first fundamental representation  is
\begin{equation}\label{bn wt}
-L_1\os{-\a_1}\longleftarrow \cdots \os{-\a_{n-1}}\longleftarrow -L_n\os{-\a_n}\longleftarrow 0\os{-\a_n}\longleftarrow L_n\os{-\a_{n-1}}\longleftarrow \cdots\os{-\a_2}\longleftarrow L_2\os{-\a_1}\longleftarrow L_1.
\end{equation}
% Consider an element ${\bf u}=\sum_{i=1}^n u^i_x H_{\alpha_i}\in \fh$.
%Motivated by the zero-curvature equation and \cite{F-1line}, we have the following result for $A_n$. 
The $\bu$ in \er{players} is 
$$
\bu=\on{Diag}(u^1_x, u^2_x - u^1_x, \cdots, 2 u^n_x - u^{n-1}_x, -u^1_x, -u^2_x + u^1_x, \cdots, -2u^n_x + u^{n-1}_x, 0). 
$$

By Theorem \ref{quick}, consider the expansion
\begin{equation}\label{Bn det}
\begin{split}
&\quad (\p-u^1_x)(\p+u^1_x-u^2_x)\cdots(\p+u^{n-1}_x-2u^n_x)\\
&\quad\quad \p(\p+2u^{n}_x-u^{n-1}_x)
%(\p+u(n-1)_x-u(n-2)_x)
\cdots(\p+u^2_x-u^1_x)(\p+u^1_x)\\
&=\p^{2n+1}+\sum_{j=1}^n I_j\p^{2n-2j+1}+\sum_{j=1}^n J_j\p^{2n-2j}.
\end{split}
\end{equation}
%where the last expression is obtained by expansion and applying the Liebniz rule. 
Then the $I_j$ for $1\leq j\leq n$  are 
%differential polynomials in the $u^i$ and are 
primitive characteristic integrals  of the $B_n$ Toda field theory, and the $J_j$ are some differential polynomials in them.  

\end{example}

\begin{example}[$\fg_2$]
The Cartan matrix is 
$$
\begin{pmatrix}
2 & -1\\
-3 & 2
\end{pmatrix}
$$
For simplicity we call the unknown functions $u^1, u^2$ by $u$ and $v$. 
The Toda system is 
$$
\begin{cases}
u_{xy}=-e^{2u-v}\\
v_{xy}=-e^{-3u+2v}
\end{cases}
$$

Although we can work abstractly with just the above Cartan matrix, we choose to use the embedding of $\fg_2\subset \fs\fo_7=B_3$ such that the two roots are 
$$
\a_1=L_1-L_2,\quad \a_2:=L_2-L_3,
$$
and 
\begin{align*}
H_{\alpha_1}&=\on{Diag}(1,-1,2,-1,1,-2,0)\\
H_{\alpha_2}&=\on{Diag}(0,1,-1,0,-1,1,0). 
\end{align*}
%and The condition is that $H_1=H_2+H_3$. 
The $\bu$ in \er{players} is 
$${\bf u}= 
%u_x H_{\alpha_1}+v_x H_{\a_2}=
\on{Diag}(u_x, -u_x + v_x, 2u_x-v_x, -u_x, u_x - v_x, -2u_x+v_x, 0)
.$$

The first fundamental representation of $\fg_2$ is the restriction of that of $B_3$. Therefore we follow the weight diagram \er{bn wt}. 
The degrees of $\fg_2$ are 2 and 6. %The weight diagram for the 7 dimensional-representation using $\fg_2\subset \fs\fo_7$ is 
%$$
%-L_1\os{-\a_1}\longleftarrow \cdots \os{-\a_{n-1}}\longleftarrow -L_n\os{-\a_n}\longleftarrow L_n\os{-\a_{n-1}}\longleftarrow \cdots\os{-\a_2}\longleftarrow L_2\os{-\a_1}\longleftarrow L_1.
%$$

By Theorem \ref{quick}, consider the expansion
%Expand the differential determinant as in the $B_3$ case
\begin{equation}
\begin{split}
%&\quad \det(\p+{\bf u})\\
%&=\det(\mathrm{Diag}\,(\p+u_1,\p-u_1+v_1,\p+2u_1-v_1,\\
%&\quad\quad \p-u_1,\p+u_1-v_1,\p-2u_1+v_1,\p)\\
&\quad(\p-u_x)(\p+u_x-v_x)(\p-2u_x+v_x)\\
&\quad\quad \p(\p+2u_x-v_x)(\p-u_x+v_x)(\p+u_x)\\
&=\p^{7}+I_1\p^5+I_2\p+\sum_{j=1}^{3} J_j\p^{5-j}+J_4. 
\end{split}
\end{equation}
%where the last expression is obtained by expansion and applying the Liebniz rule. 
Then the $I_1$ and $I_2$ are 
%differential polynomials in $u$ and $v$ and are 
primitive characteristic integrals of the $\fg_2$ Toda field theory, and the $J_j$ are some differential polynomials in them.  
\end{example}

\section{An example of Theorem \ref{ds gauge}}

It can be said that Theorem \ref{quick} is a quick way to compute a Drinfeld-Sokolov gauge \er{bring up} when there is a non-branching representation, and this is the viewpoint in \cite{W-sym}. When such a simple representation does not exist, a Drinfeld-Sokolov gauge can still be computed through an inductive procedure (see \cite{DS, gen-W-sym}). In the particular case of $D_n$, where an explicit matrix presentation of the Lie algebra is easy to write down, one can use a computer algebra system like Maple to solve a Drinfeld-Sokolov gauge easily and therefore obtain the characteristic integrals according to Theorem \ref{ds gauge}. 

In the following example, we will show how this works for $D_4$. First recall that the Cartan matrix of $D_n$ is 
$$
\begin{pmatrix}
2 & -1 & & & & \\
-1 & 2 & -1 & & & \\
 & \ddots & \ddots & \ddots& & \\
 & &-1 & 2 & -1 & -1\\
 & & & -1 & 2 & \\
 & & & -1 &  & 2\\
 \end{pmatrix},
$$
and its first fundamental representation branches with the following weight diagram
\small
\begin{equation}\label{splits}
\xymatrix{
 & & & & L_n\ar[ld]_{-\a_n} & & & & \\
-L_1&-L_2\ar[l]_{-\a_1} &\cdots\ar[l]_{-\a_2} & -L_{n-1}\ar[l]_{-\a_{n-2}} & & L_{n-1}\ar[lu]_{-\a_{n-1}}\ar[ld]^{-\a_n} & \cdots\ar[l]_{-\a_{n-2}} & L_2\ar[l]_{-\a_2} & L_1\ar[l]_{-\a_1}\\
& & & & -L_n\ar[lu]^{-\a_{n-1}} & & & & 
}
\end{equation}
\normalsize
Therefore Theorem \ref{quick} can not be applied to the Lie algebras $D_n$. Nor to $F_4$ or the $E$'s.

\begin{example}[$D_4$] For simplicity, we write $u, v, w, z$ for the unknown functions $u^1,\cdots,u^4$. The Toda field theory \er{toda} for $D_4$ is 
$$
\begin{cases}
u_{xy}=-e^{2u-v}\\
v_{xy}=-e^{-u+2v-w-z}\\
w_{xy}=-e^{-v+2w}\\
z_{xy}=-e^{-v+2z}.
\end{cases}
$$

The degrees of $D_4$ are $2, 4, 4, 6$, with the Pfaffian, the square root of the determinant, being an extra adjoint-invariant function of degree 4. 
%The first fundamental representation of $D_4$ on the $8-$dimensional space clearly branches. 

In this example, we use the convention that the non-degenerate symmetric matrix preserved by the orthogonal matrices in $SO(8)$ is the anti-diagonal matrix $(\delta_{i, 9-j})_{i,j=1}^8$. Then the matrices in $D_4=\fs\fo_8$ are skew-symmetric with respect to the anti-diagonal. 
%We have the following form for 
Using the notation $u_1=u_x$ and so on, the $\e + \bu$ from \er{players} is 
$$
{\small
\e+{\bf u}=
\begin{pmatrix}
u_1 &  & & & & & & \\
1 & v_1-u_1 &  & & & & & \\
 & 1 & z_1+w_1-v_1 &  & & & & \\
 & & 1 & z_1-w_1 & & & & \\
 & & 1 &  & -z_1+w_1 & & & \\
 & & & -1 & -1 & -z_1-w_1+v_1 & & \\
 & & & & & -1 &  -v_1+u_1& \\
 & & & & & & -1 & -u_1
\end{pmatrix}}.
$$
We choose our slice in \er{bring up} to be 
$$
{\small
\e+{\bf I}=
\begin{pmatrix}
0 &  & & & I_2 & & I_4 & \\
1 & 0 &  & & & I_3 & & -I_4 \\
 & 1 & 0 &  & I_1 & & -I_3 & \\
 & & 1 & 0 & & -I_1 & & -I_2 \\
 & & 1 &  & 0 & & & \\
 & & & -1 & -1 &0  & & \\
 & & & & & -1 &  0& \\
 & & & & & & -1 & 0
\end{pmatrix}}.
$$
A unipotent orthogonal matrix $B\in N_{SO(8)}$ is an upper-triangular matrix with $1$'s on the diagonal and element $b_{i,j}$ at the position $(i,j)$ for $i<j$ satisfying some extra conditions for the orthogonality. In particular we have that $b_{4,5}=0$ and $b_{3,4}+b_{5,6}=0$. (Actually one can write out the $B$ using its affine coordinates.) Solving \er{bring up} by Maple, we get the following expression for the primitive characteristic integrals of the $D_4$ Toda field theory. 
{\allowdisplaybreaks
\begin{align*}
I_1&=-(u_{{2}}+v_{{2}}+z_{{2}}+w_{{2}}-{u_{{1}}}^{2}-{v_{{1}}}^{2}-{z_{{1}}}^{2}-{w_{{1}}}^{2}+u_{{1}}v_{{1}}+z_{{1}}v_{{1}}+
w_{{1}}v_{{1}})\\
I_2&=u_{{4}}+v_{{4}}+2\,w_{{4}}+z_{{2}}{v_{{1}}}^{2}+3\,w_{{2}}
v_{{2}}-w_{{2}}{v_{{1}}}^{2}+{w_{{1}}}^{2}v_{{2}}+v_{{2}}z
_{{2}}\\&
-v_{{2}}{z_{{1}}}^{2}+2\,v_{{2}}u_{{2}}-z_{{2}}u_{{2
}}-w_{{2}}{u_{{1}}}^{2}+z_{{2}}{u_{{1}}}^{2}+v_{{3}}u_{{1
}}\\&
-2\,u_{{3}}u_{{1}}+2\,v_{{3}}w_{{1}}-{w_{{1}}}^{2}u_{{2}
}+w_{{2}}u_{{2}}-4\,{w_{{2}}}^{2}-2\,{v_{{2}}}^{2}\\&
-2\,{u_{{2}}}^{
2}+{z_{{1}}}^{2}u_{
{2}}+w_{{2}}v_{{1}}u_{{1}}-z_{{1}}v_{{1}}u_{{2}}+w_{{1
}}v_{{1}}u_{{2}}\\&
-z_{{2}}v_{{1}}u_{{1}}-2\,v_{{1}}w_{{1}}
v_{{2}}+2\,w_{{2}}v_{{1}}w_{{1}}+2\,z_{{1}}v_{{1}}v_{{2}
}-2\,v_{{1}}z_{{1}}z_{{2}}\\&
+u_{{3}}v_{{1}}-2\,v_{{1}}v_{{3
}}+v_{{1}}z_{{3}}-4\,w_{{1}}w_{{3}}+v_{{1}}w_{{3}}-{z_{{1
}}}^{2}{u_{{1}}}^{2}+{w_{{1}}}^{2}{u_{{1}}}^{2}\\&
-{w_{{1}}}^{2}v
_{{1}}u_{{1}}-z_{{1}}{v_{{1}}}^{2}u_{{1}}-w_{{1}}v_{{1}}
{u_{{1}}}^{2}+{z_{{1}}}^{2}v_{{1}}u_{{1}}+z_{{1}}v_{{1
}}{u_{{1}}}^{2}+w_{{1}}{v_{{1}}}^{2}u_{{1}}\\
I_3 &= 2\,u_{{4}}+v_{{4}}+w_{{4}}+{w_{{1}}}^{2}{u_{{1}}}^{2}-2\,v_{{1}}v_{{
3}}+v_{{1}}z_{{3}}-2\,w_{{1}}w_{{3}}+v_{{1}}w_{{3}}\\&
+u_{{
3}}v_{{1}}+3\,v_{{2}}u_{{2}}-w_{{2}}{u_{{1}}}^{2}+w_{{2}
}u_{{2}}+2\,v_{{3}}u_{{1}}-4\,u_{{3}}u_{{1}}+v_{{3}}w_{{1
}}\\&
-{w_{{1}}}^{2}u_{{2}}+v_{{2}}z_{{2}}-v_{{2}}{z_{{1}}}^{2
}+z_{{2}}{v_{{1}}}^{2}+2\,w_{{2}}v_{{2}}-2\,{v_{{2
}}}^{2}-4\,{u_{{2}}}^{2}\\&
-2\,{w_{{2}}}^{2}-{w_{{1}}}^{2}v_{{1}}
u_{{1}}+w_{{1}}{v_{{1}}}^{2}u_{{1}}+2\,z_{{1}}v_{{1}}v_{{2
}}-2\,v_{{1}}z_{{1}}z_{{2}}+w_{{1}}v_{{1}}u_{{2}}\\&
+w_{{2
}}v_{{1}}u_{{1}}
-w_{{1}}v_{{1}}{u_{{1}}}^{2}-{v_{{1}}}^{2}
u_{{2}}+{z_{{1}}}^{2}w_{{2}}-{z_{{1}}}^{2}{w_{{1}}}^{2}+v_{{
2}}{u_{{1}}}^{2}-z_{{2}}w_{{2}}\\&
+z_{{2}}{w_{{1}}}^{2}
+{z_{{
1}}}^{2}v_{{1}}w_{{1}}-{v_{{1}}}^{2}z_{{1}}w_{{1}}+{w_{{1
}}}^{2}v_{{1}}z_{{1}}-z_{{2}}v_{{1}}w_{{1}}-w_{{2}}v_{{1
}}z_{{1}}\\&
+2\,v_{{1}}u_{{2}}u_{{1}}-2\,v_{{2}}v_{{1}}u_{{1
}}\\
I_4 &= -u_6 -\tfrac{1}{2}\, v_6 -w_6 +\text{a lot of other terms which we omit}
\end{align*}
}
Actually the usual Pfaffian of $\e+{\bf u}$, which is the product of the first 4 diagonal entries, is contained in the characteristic integral $I_2$ as the last terms involving only partial derivatives of the first order.

\begin{remark} In \cite{W-sym}, there is a procedure to apply an integration step in using the first fundamental representation \er{splits} of $D_n$ to get an analogous formula to Theorem \ref{quick}.
%, but in the spirit of transforming to Drinfeld-Sokolov gauge. 
That integration step will cause us to lose one characteristic integral, which in the case of $D_4$ is exactly the above $I_2$ corresponding to the Pfaffian. 
\end{remark} 
\end{example}

As stated in Remark \ref{DS gauging}, the Drinfeld-Sokolov gauging can be achieved without using a representation. Intrinsically one can use the adjoint representation constructed from the structure equations of the Lie algebra. This author has done this for the $E_6$ Toda field theory. 

\begin{bibdiv}
\begin{biblist}

\bib{ian's}{article}{
   author={Anderson, Ian M.},
   author={Fels, Mark E.},
   author={Vassiliou, Peter J.},
   title={Superposition formulas for exterior differential systems},
   journal={Adv. Math.},
   volume={221},
   date={2009},
   number={6},
   pages={1910--1963},
   issn={0001-8708},
   %review={\MR{2522831 (2010i:37155)}},
   %doi={10.1016/j.aim.2009.03.010},
}

\bib{BBT}{book}{
   author={Babelon, Olivier},
   author={Bernard, Denis},
   author={Talon, Michel},
   title={Introduction to classical integrable systems},
   series={Cambridge Monographs on Mathematical Physics},
   publisher={Cambridge University Press},
   place={Cambridge},
   date={2003},
   pages={xii+602},
   isbn={0-521-82267-X},
   %review={\MR{1995460 (2004e:37085)}},
   %doi={10.1017/CBO9780511535024},
}

\bib{W-sym}{article}{
   author={Balog, J.},
   author={Feh{\'e}r, L.},
   author={O'Raifeartaigh, L.},
   author={Forg{\'a}cs, P.},
   author={Wipf, A.},
   title={Toda theory and $\scr W$-algebra from a gauged WZNW point of view},
   journal={Ann. Physics},
   volume={203},
   date={1990},
   number={1},
   pages={76--136},
   issn={0003-4916},
   %review={\MR{1071398 (92b:58084)}},
}

\bib{dem}{article}{
   author={Demskoi, D. K.},
   title={Integrals of open two-dimensional lattices},
   journal={Theoret. and Math. Phys.},
   volume={163},
   date={2010},
   number={1},
   pages={466-471},
}

\bib{DS}{article}{
   author={Drinfel{\cprime}d, V. G.},
   author={Sokolov, V. V.},
   title={Lie algebras and equations of Korteweg-de Vries type},
   language={Russian},
   conference={
      title={Current problems in mathematics, Vol. 24},
   },
   book={
      series={Itogi Nauki i Tekhniki},
      publisher={Akad. Nauk SSSR Vsesoyuz. Inst. Nauchn. i Tekhn. Inform.},
      place={Moscow},
   },
   date={1984},
   pages={81--180},
   %review={\MR{760998 (86h:58071)}},
}

\bib{gen-W-sym}{article}{
   author={Feh{\'e}r, L.},
   author={O'Raifeartaigh, L.},
   author={Ruelle, P.},
   author={Tsutsui, I.},
   author={Wipf, A.},
   title={Generalized Toda theories and $\scr W$-algebras associated with
   integral gradings},
   journal={Ann. Physics},
   volume={213},
   date={1992},
   number={1},
   pages={1--20},
   issn={0003-4916},
   %review={\MR{1144596 (92m:81241)}},
   %doi={10.1016/0003-4916(92)90280-Y},
}

\bib{FF-LNM}{article}{
   author={Feigin, Boris},
   author={Frenkel, Edward},
   title={Integrals of motion and quantum groups},
   conference={
      title={Integrable systems and quantum groups},
      address={Montecatini Terme},
      date={1993},
   },
   book={
      series={Lecture Notes in Math.},
      volume={1620},
      publisher={Springer},
      place={Berlin},
   },
   date={1996},
   pages={349--418},
  %review={\MR{1397275 (97d:58093)}},
  %doi={10.1007/BFb0094794},
}

\bib{FH}{book}{
   author={Fulton, William},
   author={Harris, Joe},
   title={Representation theory},
   series={Graduate Texts in Mathematics},
   volume={129},
   note={A first course;
   Readings in Mathematics},
   publisher={Springer-Verlag},
   place={New York},
   date={1991},
   pages={xvi+551},
   isbn={0-387-97527-6},
   isbn={0-387-97495-4},
   %review={\MR{1153249 (93a:20069)}},
}

\bib{G-Zhiber}{article}{
   author={Gur{\cprime}eva, A. M.},
   author={Zhiber, A. V.},
   title={Laplace invariants of two-dimensionalized open Toda chains},
   language={Russian, with Russian summary},
   journal={Teoret. Mat. Fiz.},
   volume={138},
   date={2004},
   number={3},
   pages={401--421},
   issn={0564-6162},
   translation={
      journal={Theoret. and Math. Phys.},
      volume={138},
      date={2004},
      number={3},
      pages={338--355},
      issn={0040-5779},
   },
}

\bib{G-Z-arxiv}{article}{
   author={Gur{\cprime}eva, A. M.},
   author={Zhiber, A. V.},
   title={Laplace invariants of Toda lattices with exceptional Cartan matrices},
   eprint={http://arxiv.org/abs/nlin/0512001}
   date={2005}
}

\bib{K1}{article}{
   author={Kostant, Bertram},
   title={The principal three-dimensional subgroup and the Betti numbers of
   a complex simple Lie group},
   journal={Amer. J. Math.},
   volume={81},
   date={1959},
   pages={973--1032},
   issn={0002-9327},
   %review={\MR{0114875 (22 \#5693)}},
}

\bib{K2}{article}{
   author={Kostant, Bertram},
   title={Lie group representations on polynomial rings},
   journal={Amer. J. Math.},
   volume={85},
   date={1963},
   pages={327--404},
   issn={0002-9327},
   %review={\MR{0158024 (28 \#1252)}},
}

\bib{ho}{article}{
   author={Hohler, Erling G. B.},
   author={Olaussen, K{\aa}re},
   title={Using conservation laws to solve Toda field theories},
   journal={Internat. J. Modern Phys. A},
   volume={11},
   date={1996},
   number={10},
   pages={1831--1853},
   issn={0217-751X},
   %review={\MR{1384179 (97a:58077)}},
   %doi={10.1142/S0217751X96000961},
}

\bib{LS}{article}{
   author={Leznov, A. N.},
   author={Saveliev, M. V.},
   title={Representation of zero curvature for the system of nonlinear
   partial differential equations $x_{\alpha ,z\bar z}={\rm
   exp}(kx)_{\alpha }$ and its integrability},
   journal={Lett. Math. Phys.},
   volume={3},
   date={1979},
   number={6},
   pages={489--494},
   issn={0377-9017},
   %review={\MR{555332 (82k:58045)}},
   %doi={10.1007/BF00401930},
}

\bib{LS-book}{book}{
   author={Leznov, A. N.},
   author={Saveliev, M. V.},
   title={Group-theoretical methods for integration of nonlinear dynamical
   systems},
   series={Progress in Physics},
   volume={15},
   note={Translated and revised from the Russian;
   Translated by D. A. Leuites},
   publisher={Birkh\"auser Verlag},
   place={Basel},
   date={1992},
   pages={xviii+290},
   isbn={3-7643-2615-8},
   %review={\MR{1166385 (93d:58194)}},
}

\bib{beffa:2008a}{article}{
   author={Mari Beffa, G.},
   title={Geometric Hamiltonian structures on flat semisimple homogeneous
   manifolds},
   journal={Asian J. Math.},
   volume={12},
   date={2008},
   number={1},
   pages={1--33},
}

\bib{Shabat}{article}{
   author={Shabat, A. B.},
   title={Higher symmetries of two-dimensional lattices},
   journal={Phys. Lett. A},
   volume={200},
   date={1995},
   number={2},
   pages={121--133},
   issn={0375-9601},
}

\bib{preprint}{article}{
   author={Shabat, A. B.},
   author={Yamilov, R. L.},
   title={Exponential systems of type II and Cartan matrices},
   status={Preprint No. 1 Bashkir Branch USSR Acad. Sci. Ufa}
   date={1981}
}

\bib{Yamaguchi}{article}{
   author={Yamaguchi, Keizo},
   title={Differential systems associated with simple graded Lie algebras},
   conference={
      title={Progress in differential geometry},
   },
   book={
      series={Adv. Stud. Pure Math.},
      volume={22},
      publisher={Math. Soc. Japan},
      place={Tokyo},
   },
   date={1993},
   pages={413--494},
}

\bib{ZS}{article}{
   author={Zhiber, A. V.},
   author={Sokolov, V. V.},
   title={Exactly integrable hyperbolic equations of Liouville type},
   language={Russian, with Russian summary},
   journal={Uspekhi Mat. Nauk},
   volume={56},
   date={2001},
   number={1(337)},
   pages={63--106},
   issn={0042-1316},
   translation={
      journal={Russian Math. Surveys},
      volume={56},
      date={2001},
      number={1},
      pages={61--101},
      issn={0036-0279},
   },
   %review={\MR{1845643 (2002e:35169)}},
   %doi={10.1070/rm2001v056n01ABEH000357},
}

\end{biblist}
\end{bibdiv}

\bigskip
\end{document}

\bib{para}{book}{
   author={{\v{C}}ap, Andreas},
   author={Slov{\'a}k, Jan},
   title={Parabolic geometries. I},
   series={Mathematical Surveys and Monographs},
   volume={154},
   note={Background and general theory},
   publisher={American Mathematical Society},
   place={Providence, RI},
   date={2009},
   pages={x+628},
   isbn={978-0-8218-2681-2},
   %review={\MR{2532439 (2010j:53037)}},
}

We actually list the results easily computed by Maple. Also for simplicity, we use the notation $u_1=u_x,\ u_2=u_{xx},\ u_3=u_{xxx}$ and so on. 
The results are 
{\allowdisplaybreaks
\begin{align*}
I_1&=6\,u_2+2\,v_2-6\,{u_1}^2+6\,u_1v_1-2\,{v_1}^2,\\
I_2&=5\,u_{{6}}+v_{{6}}+98\,u_{{2}}v_{{2}}v_{{1}}u_{{1}}-v_{{4
}}{u_{{1}}}^{2}-v_{{4}}{v_{{1}}}^{2}-{v_{{1}}}^{4}{u_{{1}}}^
{2}+21\,v_{{4}}u_{{2}}+30\,v_{{3}}u_{{3}}\\&
+3\,v_{{1}}u_{{5}
}+5\,v_{{5}}u_{{1}}-10u_{{5}}u_{{1}}-17u_{{4}}u_{{2}}
+
19\,v_{{2}}u_{{4}}-2\,v_{{5}}v_{{1}}-23\,u_{{4}}{u_{{1}}}^
{2}\\&
-7\,v_{{4}}v_{{2}}-7\,{v_{{1}}}^{2}u_{{4}}+46\,u_{{3}}{u_
{{1}}}^{3}-23\,v_{{3}}{u_{{1}}}^{3}-3\,{v_{{1}}}^{3}u_{{3}}+
2\,v_{{3}}{v_{{1}}}^{3}+28\,{v_{{2}}}^{2}{u_{{1}}}^{2}\\&
+114\,{u
_{{2}}}^{2}{u_{{1}}}^{2}+6\,{v_{{2}}}^{2}{v_{{1}}}^{2}+2\,{v_{
{1}}}^{4}u_{{2}}-13\,{v_{{1}}}^{2}{u_{{1}}}^{4}-10\,{v_{{2}}
}^{2}u_{{2}}+46\,v_{{2}}{u_{{2}}}^{2}+17\,{v_{{1}}}^{2}{u_{{2
}}}^{2}\\&
-2\,v_{{2}}{u_{{1}}}^{4}+12\,u_{{2}}{u_{{1}}}^{4}
+12\,v
_{{1}}{u_{{1}}}^{5}+6\,{v_{{1}}}^{3}{u_{{1}}}^{3}+27\,u_{{4}
}v_{{1}}u_{{1}}+63\,u_{{3}}v_{{1}}u_{{2}}\\&
-21\,v_{{2}}v_{{1
}}u_{{3}}-90\,{u_{{2}}}^{2}v_{{1}}u_{{1}}+29\,v_{{3}}v_{{1
}}{u_{{1}}}^{2}-11\,v_{{3}}{v_{{1}}}^{2}u_{{1}}
-22\,{v_{{2
}}}^{2}v_{{1}}u_{{1}}-126\,u_{{3}}u_{{2}}u_{{1}}\\&
-6\,v_{{3}
}v_{{2}}v_{{1}}+16\,v_{{2}}v_{{1}}{u_{{1}}}^{3}-v_{{3}}v_{
{2}}u_{{1}}+23\,v_{{3}}u_{{2}}u_{{1}}-18\,u_{{2}}{v_{{1}
}}^{3}u_{{1}}-122\,u_{{2}}v_{{2}}{u_{{1}}}^{2}\\&
+21\,u_{{3}}{v
_{{1}}}^{2}u_{{1}}-16\,v_{{2}}{v_{{1}}}^{2}{u_{{1}}}^{2}+42
\,u_{{3}}v_{{2}}u_{{1}}-57\,u_{{3}}v_{{1}}{u_{{1}}}^{2}+50
\,u_{{2}}{v_{{1}}}^{2}{u_{{1}}}^{2}-48\,u_{{2}}v_{{1}}{u_{{1
}}}^{3}\\&+4\,v_{{2}}{v_{{1}}}^{3}u_{{1}}
-22\,v_{{2}}{v_{{1}}
}^{2}u_{{2}}-v_{{4}}v_{{1}}u_{{1}}-12\,v_{{3}}v_{{1}}u_{{2
}}-10\,{u_{{3}}}^{2}-5\,{v_{{3}}}^{2}-42\,{u_{{2}}}^{3}\\&
-2\,{v_
{{2}}}^{3}-4\,{u_{{1}}}^{6}
\end{align*}
}
The other terms as their differential polynomials are 
\begin{align*}
J_1&=\tfrac{5}{2}\,I_{1,x}\\
J_2&=3\,I_{1,xx}+\tfrac{1}{4}\,{I_1}^2\\
J_3&=2\, I_{1, xxx}+\tfrac{3}{4}\,I_1\cdot I_{1, x}\\
J_4&=\tfrac{1}{2}\,I_{2,x}-\tfrac{1}{4}\,I_{1,xxxxx}-\tfrac{3}{8}\,I_{1,x}\cdot I_{1,xx} -\tfrac{1}{8}\,I_1\cdot I_{1,xxx}
\end{align*}